\newtheorem{theorem}{Theorem}[section]
\newtheorem{definition}{Definition}[section]
\newtheorem{corollary}{Corollary}[theorem]
\newtheorem{prop}{Proposition}
\DeclareMathOperator{\tr}{Tr}
\def\section{\@startsection{section}{1}{\z@}{-3.25ex plus -1ex minus
		-.2ex}{1.5ex plus .2ex}{\normalfont\bfseries}}
\def\subsection{\@startsection{subsection}{1}{\z@}{-3.25ex plus -1ex
		minus -.2ex}{1.5ex plus .2ex}{\normalfont\itshape}}
\date{}
\title{Bootstrapping Dirac Ensembles}
\author{Hamed Hessam, Masoud Khalkhali, and  Nathan Pagliaroli\\
	Department of Mathematics, University of Western Ontario\\
	London, Ontario, Canada\footnote{\emph{Email addresses}: hhessam@uwo.ca, masoud@uwo.ca, npagliar@uwo.ca}}
\begin{document}

	\maketitle
	\begin{abstract}
	We apply   the bootstrap technique  to find the moments of certain multi-trace and multi-matrix  random matrix models suggested by  noncommutative geometry. Using  bootstrapping   we are able to find the relationships between the coupling constant  of these models and their  second moments. Using the Schwinger-Dyson equations,  all other moments can  be expressed  in terms of the coupling constant  and the second moment.   Explicit relations for higher mixed moments are also obtained.  
	\end{abstract}
\section{Introduction}

In this paper we use the bootstrap method to find the moments of certain {\it multi-trace} and {\it multi-matrix} random matrix models inspired by noncommutative geometry. These large $N$ limit moments satisfy an infinite system of nonlinear equations, due to Migdal \cite{Migdal},  called {\it loop equations}. The loop equations are consequences of Schwinger-Dyson equations (SDE's).  The SDE's put constraints on the moments and these constraints help to narrow the search for moments, but this is usually not enough. The process of further narrowing the search space by adding certain extra {\it positivity constraints} is called {\it bootstrapping}. This idea  was recently used by Anderson and  Kruczenski in the context of lattice gauge theory \cite{Anderson}. Then in a  
   random matrix setting  it was used by Lin \cite{bootstraps},   and will be employed throughout this paper as well.  We will see that  further positivity constraints are obtained from the fact that our matrix ensembles originate from  Dirac ensemble. This is an added feature that is absent in standard matrix models.  By narrowing down the search space one can sometimes recover the values of the initial moments. From there the loop equations can be used, in theory, to find any moment. Using the bootstrap technique we are able to find the relationships between the coupling constant of the model and the second moment.
From there all other moments can  be expressed  in terms of the coupling constant  and the second moment, allowing them to be computed explicitly. We also obtain explicit relations for higher mixed moments.

By a {\it Dirac ensemble}   we mean an statistical ensemble of {\it  finite real spectral triples} where the Fermion space is kept fixed, but the Dirac operator is allowed to be random subject to constraints of a  real spectral triple. Technical definitions will be given further below. Such ensembles were first defined by Barrett and Glaser \cite{Barrett2016} with the goal of building  toy models  of Euclidean quantum gravity over a finite  noncommutative space.  They studied these models via computer  simulation, in particular  by Markov chain Monte Carlo  methods. They also  indicated phase transition and multi-cut regimes in their spectral distribution.  Quite significantly  they also noticed, numerically and for  particular models, that at the phase transition point  the limiting  spectral distribution of their  models resembles   the Dirac eigenvalue distribution of a round sphere. That is they have  a manifold  behaviour at phase transition. One expects  all such models to have a manifold type behaviour at phase transition points, but we are still far from proving this attractive conjecture rigorously. We shall indicate the main reason for this further below  in this introduction. 
In \cite{Shahab thesis, AK, Sanchez, Sanchez2, Sanchez3, First paper, Second paper, Gesteau},  formal and analytic aspects of these models and their generalizations  are studied through topological recursion techniques as well as standard random matrix theory methods.

In these models  the integration over metrics is replaced with integration over Dirac operators,
\begin{equation} \label{pf}
Z  =     \int_{metrics}  e^{-S(g)} D (g)  \quad \Rightarrow  \quad \mathcal{Z} = \int_{Diracs} e^{- S (D)}  d D. \, 
\end{equation}
This can be justified by general principles of noncommutative geometry to be explained  below.  In particular Dirac operators are  taken as  dynamical variables and play the role of metric  fields in gravity.   It is a feature of these models that the moduli space of Dirac operators are typically  finite dimensional vector spaces. The action functional $S$ is chosen in such a way that the partition function $\mathcal{Z}$ 
 is absolutely convergent and finite. For example, $ S (D)= \text{Tr} (f (D))$ for a  real polynomial $f$ of even degree with a positive leading coefficient. Note that, in general,  these matrix integrals are not necessarily convergent,  or even need not have a real valued potential. However, they may always be interpreted as formal matrix integrals, which are the generating functions of certain types of maps \cite{Eynard2018}. We mention  that the models considered by Barrett and Glaser and in this paper are always convergent. For more details on formal and convergent matrix models see \cite{Eynard combin}. For a treatment of these integrals when convergent we refer the reader to \cite{First paper}.

The backbone of a spectral triple is the   data $(\mathcal{A},\mathcal{H}, D)$, where $\mathcal{A}$ is an involutive complex algebra acting by bounded operators  on a  Hilbert space $\mathcal{H}$, and  $D$ is a  self-adjoint (in general unbounded) operator acting   on $\mathcal{H}$. This data is required to satisfy some regularity   conditions. For finite spectral triples these conditions are automatically satisfied. A real spectral triple is equipped with two extra operators $J$ and $\gamma$, the charge conjugation  and the chirality operator. 
Finite dimensional real spectral triples have been fully classified by Krajewski in 
    \cite{Finite spectral triples}. Other references include 
	   \cite{QFTNCG, Marcolli, Barrett2015, VS}. In this paper we work exclusively with finite dimensional real spectral triples introduced by Barrett in \cite{Barrett2015}.  Such  finite real spectral triples represent  a noncommutative finite  set equipped with a metric.

Using the classifications of finite spectral triples  and their Dirac operators  one can express the Barrett-Glaser  models as  multi-trace and  multi-matrix  random matrix models 
\cite{Barrett2015, Barrett2016} (cf. also  \cite{Shahab thesis, AK, Sanchez, Sanchez2, Sanchez3}). Since most of the results in random matrix theory is for single matrix and single trace models, this shows that   the analytic study of these models 
as convergent matrix integrals is quite difficult in general.  The scarcity of analytic tools is one of the reasons we are still far from a rigorous proof  of the manifold type behaviour of these models at phase transition. 
The first analytic treatment of these models was carried out in \cite{First paper} where the phase transition for type $(1, 0)$ and $(0, 1)$   models was rigorously proved. In a recent paper the large $N$ limit spectral density function of Gaussian  Dirac ensembles is obtained and shown to be given by   the convolution of the Wigner semicircle law  with itself \cite{Second paper}.   We also mention that in \cite{Sanchez, Sanchez2}  the algebraic  structure of the action functional of these models is further analyzed and linked to free probability theory.  This gives more hope for analytic treatment of these models in general.

One of the motivations behind the introduction of  these models in \cite{Barrett2016} is the  well known observation  that a combination of the Heisenberg uncertainty principle  and Einstein's general relativity will force the spacetime to lose its classical nature as a pseudo-Riemannian manifold at Planck length. This is essentially due to formation of black holes when we probe the space at Planck length. There are many proposals as to what should replace the classical spacetime.  A noncommutative space in the sense of spectral triples
is an attractive proposal since the metric, as a necessary dynamical variable of a theory of gravity, is already encoded by  the Dirac operator. Furthermore,  assuming the spectral triple to be finite,  allows computer simulation as well as methods of random matrix theory to be applied and various scenarios to be tested. There is also a more ambitious hope of  coupling  these models with fermions, as they appear in the finite spectral triple $F$ of the standard model of elementary particles \cite{QFTNCG, VS}.  This will be  along the lines of the work of  Chamseddine, Connes and Marcolli \cite{Neutrino Mixing} where they consider spectral triples of the form $X \times F$, where $X$ is a Riemannian manifold  which  represents the gravitational sector.   Using the spectral action principle \cite{Spectral action} and heat kernel expansion they were able to obtain the standard model Lagrangian coupled with gravity. For a recent account and survey see \cite{Advances in NCG}. Again instead of a manifold, one needs to use a noncommutative space and a finite real spectral triple is a good first approximation to that.  We should mention that the initial steps in this direction has already been taken in \cite{Gesteau}, and also in \cite{Sanchez3}. Specially in the latter work the kinematics of coupling  the gravitational field, in the context of finite spectral triples, with the Yang-Mills-Higgs  field of the standard model is worked out. What remains to be done, and that is a tall order indeed,  is to choose a suitable potential $S$  to go in the path integral 
\eqref{pf}, and to  calculate various quantities of interest and also study the large $N$ limits of them. We hope to come back to this project  in the  near future.

The idea of  replacing metrics by Dirac operators  is justified by general principles of noncommutative geometry as we explain next.
For example   the distance formula of Connes  \cite{Connes94}
	$$ d(p, q) = \text{Sup}\{|f(p)-f(q)|; \, ||[D, f]|| \leq 1\},$$
	 shows that the geodesic distance on a Riemannian spin manifold  can be 	recovered from the action of its Dirac operator $D$ on the Hilbert space of spinors. This role of the Euclidean Dirac operator as a selfadjoint elliptic operator can be abstracted and cast in  the notion  of  a (real) spectral triple which  simultaneously encodes  the data of a  Riemannian metric,  a spin structure, and a  Dirac operator on  a commutative or noncommutative space   \cite{Connes95}.
	   A  deep result,   the   {\it reconstruction theorem} of Connes  \cite{Connes-2013},   shows that   a spin Riemannian manifold can be fully recovered from  a commutative real spectral triple satisfying some natural conditions. For this reason, real   spectral triples can be regarded as  noncommutative spin Riemannian manifolds, and the space of their compatible Dirac operators as the space of Riemannian metrics. 
	   
Finite real spectral triples  of interest in this paper are characterized by a pair of non-negative  integers $(p, q)$ \cite{Barrett2015,Barrett2016}. These integers count the number of gamma matrices that square to one and minus one, respectively. As $p$ and $q$ increase, the corresponding Dirac ensemble gets more and more complicated as a multi-matrix and multi-trace matrix model. The signature of the model $s = p+q$ determines the multiplicity of the matrix model. Consider for example the case when $p =1$ and $q = 0$. The Dirac operator can then be expressed as $D = H \otimes I + I \otimes H$ and the partition function  (\ref{pf}) reduces to  
 \begin{equation*}
 	\mathcal{Z} = \int_{\mathcal{H}_{N}}e^{-\tilde{S}(H)}dH,
 \end{equation*}
where $\tilde{S}$ is some new potential function  in $H$ and  $dH$ is the Lebesgue measure on  the space  $\mathcal{H}_{N}$ of $N \times N$ Hermitian matrices. Note that even in this case $\tilde{S}$ is a multi-trace function. We shall see more  examples  later in this paper. 
In this paper we are mainly concerned with finding the moments of these models in the large $N$ limit i.e. when the matrix size approaches infinity. In the above mentioned $(1,0)$ ensemble they are defined as  
\begin{equation*}
m_{k}=   \lim_{N  \rightarrow \infty}\langle \frac{1}{N} \tr H^{k}\rangle = \lim_{N \rightarrow \infty}\frac{1}{N}\,\frac{1}{Z}\int_{\mathcal{H}_{N}}\tr H^{k} e^{-\tilde{S}(H)}dH.
\end{equation*}
Mixed moments are also defined in a similar manner.

Here is a brief outline of the contents of this paper. In Section two we derive  and discuss the Schwinger-Dyson equations, the loop equations,   as well as the crucial idea of mixed moment factorization  in the large $N$ limit. We will then explain the positivity constraints on moments in both single and multi-matrix models. In Section three we will compare the numerical results from bootstrapping to the analytic solution for signature one models obtained in \cite{First paper}. In Section four we will compare features that we found using the bootstrap method   for signature two models with those obtained in \cite{Barrett2016} via  Monte Carlo simulation. In the Appendix we briefly explain how the factorization of mixed moments is obtained from their genus expansion.

We would like to warmly thank the referees for several very useful comments and suggestions that we believe led to 
a better exposition of this paper.

\section{The Bootstrap Method}\label{boot}

\subsection{The loop equations} In general,  the partition function of a   Dirac ensemble is  a  multi-trace and multi-matrix model of the form
\begin{equation*}
\mathcal{Z}:=\int_{\mathcal{H}_{N}^{m}} e^{-\tilde{S}( H_{1},  H_{2},..., H_{m})} dH_{1}... dH_{m},
\end{equation*}
where $\tilde{S}$ is the trace of a  polynomial in traces of the $m$ variables $H_1, \dots, H_m$  and their products with suitable powers of $N$ in the coefficients. This integral is over the Cartesian product of $m$ copies of the  space $\mathcal{H}_N$ of Hermitian $N\times N$ matrices and the integration is with respect to the Lebesgue measure in each matrix variable. Such an integral can be considered either as a formal or convergent matrix integral. Note that both types of models satisfy the same SDE's. The SDE's relate the moments of the model in  some word $W$ in the alphabet of matrix variables $\{ H_{1},  H_{2},..., H_{m}\}$,  defined as  expectation values

\begin{equation*}
\langle \frac{1}{N} \tr W \rangle :=\frac{1}{N}\,\frac{1}{Z}\int_{\mathcal{H}_{N}^{m}}\tr W e^{-\tilde{S}( H_{1}, H_{2},...,H_{m})} dH_{1}...dH_{m}.
\end{equation*}

The SDE's are a common technique used in Random Matrix Theory \cite{Eynard2018,Random Matrices} and can be derived  in the following manner. We shall be very brief. Take a  word $W$ as before and consider the following relation 

	\begin{equation*}
	\sum_{i,j=1}^{N}\int_{\mathcal{H}_{N}^{m}} \frac{\partial}{\partial (H_{q})_{ij}} \left( W_{ij} \,e^{-\tilde{S}( H_{1},  H_{2},..., H_{m})} \right)dH_{1}...dH_{m} =0,
	\end{equation*}
where $W_{ij}$ denotes the $(i,j)$-entry of the product of matrices that make up the word $W$. This relation easily follows from  the Stokes' theorem. The use of the product rule in the left hand side results in the Schwinger-Dyson equations. For example,  when $m =1$, $W = H_{1}^{\ell} $, and $\tilde{S}(H_{1}) =  \frac{N}{2}\tr H^{2}_{1}$,  the above equation generates the following relations
\begin{equation*}
	\sum_{k=0}^{\ell -1} \langle \tr H_{1}^{\ell-1-k} \tr H_{1}^{k} \rangle = \langle N \tr H_{1}^{\ell} \tilde{S}'(H_{1})\rangle =\langle N \tr  H_{1}^{\ell+1}\rangle  .
\end{equation*}

For a finite  $N$ the SDE's put some constraints on moments and in general do not determine the moments. However, if the large $N$ limits of moments exist, then the SDE's  simplify 
 dramatically\footnote{The large $N$ limit is understood differently for formal and convergent models, we again refer the reader to \cite{Eynard combin}. }.  This is a consequence of the factorization property of the large  $N$ limits of mixed moments.  In particular,  when $m = 1$, it is well known that the following factorization holds in the large $N$ limit
\begin{equation*}
 \langle \tr H_{1}^{a} \tr H_{1}^{b} \rangle =  \langle \tr H_{1}^{a}\rangle \langle \tr H_{1}^{b} \rangle,
\end{equation*}
for any positive integers $a$ and $b$. This is true for  single trace single matrix convergent models  \cite{Universality},  and for formal multi-trace single matrix models.
This factorization holds also in some formal multi-matrix models ($m >1$),  in particular when those models have a genus expansion. See the appendix for a detailed explanation.
One key assumption we will make for the signature two models is that this factorization does hold.  This is proved in \cite{Second paper}.

One key fact to notice is that the loop equations are relations for generating higher order moments from lower order ones. In particular one might wonder what is the minimum number of generating moments one needs to generate all of the rest. Using the terminology from \cite{bootstraps} we will refer to this collection of moments as the \textit{search space}. What we will see in the following sections is that the search space of the matrix ensembles studied here are of only one dimension.

\subsection{Bootstrapping models with positivity}
The positivity constraints on  moments of a  single matrix model in the large $N$ limit are related  to the {\it Hamburger moment problem}, as it was  noticed by  Lin \cite{bootstraps}. However,  positivity and bootstrapping was already  used in the context of solving the loop equations for lattice gauge theory by  Anderson and   Kruczenski \cite{Anderson}.  
Bootstrapping  has recently been used in several other works on matrix models including \cite{Kazakov, Monte Carlo}, and for matrix quantum mechanics \cite{Matrix QM, numerical bootstraps,  Berenstein}.

The Hamburger moment problem can be formulated as follows: given a sequence $(m_0, m_1, m_2, \dots)$ of real numbers,  one asks if there is a positive Borel measure $\mu$ on the real line so that  $m_k$ is the $k$th moment of $\mu$, that is 

$$ m_k =\int_{\mathbb{R}} x^k  d \mu (x), \quad k=0, 1, 2, \dots. $$
It is known that a necessary and sufficient condition for the existence of $\mu$ is that 
 the {\it Hankel matrix} of the moments 
\begin{align*}
\mathcal{M}=
\begin{bmatrix}
		m_0   &  m_1 &  m_2 & m_3 & \cdots \\ 
		m_1 &  m_2 &  m_3 & m_4 & \cdots \\
		m_2 &  m_3 &  m_4 & m_5 & \cdots \\
		m_3 &  m_4 &  m_5 & m_6 & \cdots \\
		\vdots & \vdots &  \vdots & \vdots & \ddots
\end{bmatrix},
\end{align*}
is a positive  matrix. That  is 
$$ \sum _{j,k\geq 0}m_{j+k}c_{j}{\overline {c_{k}}}\geq 0, $$
for all sequences $(c_j)$ of complex numbers with finite support.  In fact checking the necessity of this condition is quite easy. Take the polynomial  $f(x) = \sum c_j x^j$. Then the  positivity of the integral 
$\int_{\mathbb{R}} f(x) \overline{f (x)}  dx$ immediately implies the  positivity of the Hankel matrix. For a proof of  the sufficiency of the condition  see page 145 of \cite{RS}.

In our context $m_0 = 1,$ 
$$ m_k= \lim_{N\to \infty} \frac{1}{N}\langle \text{Tr} H^k\rangle, \quad  k = 1, 2, \dots,$$
and $d \mu (x) =\rho (x) dx$, where $ \rho(x)$ is the limiting spectral density function.

These positivity constraints can be applied to both the Hankel matrix of moments of the matrix ensemble and the Hankel matrix of moments  of the  Dirac ensemble defined as 

\begin{align*}
\mathcal{D}=
\begin{bmatrix}
1   &  d_1 &  d_2 & d_3 & \cdots \\ 
d_1 &  d_2 &  d_3 & d_4 & \cdots \\
d_2 &  d_3 &  d_4 & d_5 & \cdots \\
d_3 &  d_4 &  d_5 & d_6 & \cdots \\
\vdots & \vdots &  \vdots & \vdots & \ddots
\end{bmatrix}
\end{align*}
where 

\begin{equation*}
	d_{\ell} = 	\lim_{N \rightarrow \infty}\langle \frac{1}{N^2}\tr D^{\ell}\rangle=\lim_{N \rightarrow \infty} \frac{1}{N^2}\, \frac{1}{Z}\int_{\mathcal{G}}\tr D^{\ell} e^{- S(D)} dD.
\end{equation*}
These additional constraints are one advantage that Dirac ensembles have over matrix ensembles when bootstrapping.

There is a generalization of the (univariate) Hamburger moment problem to the non-commutative, multivariate case \cite{Burgdorf}. To discuss this generalization that we need in this paper we require the following definition. 
\begin{definition}
	The sequence of the real number $\{m_w\}_{w\in W}$ indexed by word $w$, where $W$ is the space of the words formed by Hermitian matrices $H_1 ,H_2,\cdots,H_n$, is called a tracial sequence, if $m_w = m_u$ whenever $w$ and $u$ are cyclic equivalent.
\end{definition}
The truncated moment problem asks for necessary and sufficient conditions for a tracial sequence to be a sequence of the moments of some non-commutative, multivariate distribution.

Consider the large $N$ limit of a multi-trace multi-matrix model of the following form
\begin{align*}
	\mathcal{Z}:= \int_{\mathcal{H}_N^m} e^{-S(H_1,H_2,\dots,H_m)} dH_1 \dots dH_m.
\end{align*}

The (infinite) tracial moment matrix $\mathcal{M}(m)$ of a tracial sequence $m = \{m_w\}$ indexed by words is defined by the symmetric matrix
\begin{align*}
\mathcal{M} (m)=(m_{w*u})_{w,u}.
\end{align*} 

The necessary, but not sufficient, condition for a sequence of  $\{m_w\}_{w\in W}$  is positive semi-definiteness of the tracial moment matrix \cite{Burgdorf}.

For instance, with tracial sequence $m_{\emptyset}=1 , m_A , m_B , m_{AA} , m_{AB}  , m_{BB}, \dots,$ we can enforce positivity of the sub-matrix of $\mathcal{M}$, defined as 
\begin{align*}
\begin{bmatrix}
1   &  m_A &   m_B & m_{AA} & m_{AB} & m_{BB} \\ 
m_A & m_{AA} &  m_{AB} & m_{AAA} & m_{AAB} & m_{ABB} \\
m_B &  m_{BA} &   m_{BB} & m_{BAA} & m_{BAB} & m_{BBB} \\
m_{AA} & m_{AAA} & m_{AAB} & m_{AAAA} & m_{AAAB} & m_{AABB}\\
m_{AB} & m_{BAA} & m_{BAB} & m_{BAAA} & m_{BAAB} & m_{BABB}\\
m_{BB} & m_{BBA} & m_{BBB} & m_{BBAA} & m_{BBAB} & m_{BBBB}\\
\end{bmatrix}.
\end{align*}

\subsection{The algorithm}

 A {\it Python} script is first used to generate the loop equations for all possible words up to a given order. This order is dependent on how many loop equations it takes to deduce the dimension of the search space. Once we have found the search space and generated a sufficient number of loop equations we compute all moments, assemble them into the matrices outlined above, then check for positivity of the matrix and various submatrices. This process is done for both matrix moments and Dirac moments. From here {\it Mathematica} is able to numerically find the region in which its corresponding positivity constraints  are satisfied. We increase the number of constraints until a satisfactorily small region is found. 

The manner in which our algorithm differs from \cite{bootstraps} is twofold. First, in the multi-matrix models loop equations are generated for all possible words. One might expect that this would hinder us from finding the search space by introducing more moments. In fact, we found that this was helpful in finding the search space. Secondly, we are working with Dirac ensembles which have both matrix moments and Dirac moments, allowing us to derive more positivity constraints than  if we were working with just a matrix model.

\section{One Matrix Dirac Ensembles}

Consider real finite spectral triples $(A, \mathcal{H}, D)$ where the algebra is $ A= M_N(\mathbb{C})$ and the Hilbert space is $ \mathcal{H} =\mathbb{C} \otimes M_N(\mathbb{C})$. The two signature one noncommutative geometries from \cite{Barrett2016} are
\begin{enumerate}
	\item Type (1, 0) with \begin{equation*}
	\gamma^{1} = 1,
	\end{equation*}
	\begin{equation*}
	D = \{H,\cdot\},
	\end{equation*}
	where $H$ is a Hermitian matrix.
	\item Type (0, 1) with \begin{equation*}
	\gamma^{1} =-i,
	\end{equation*} 
	\begin{equation*}
	D = \gamma^{1}\otimes[L,\cdot],
	\end{equation*}
	where $L$ is a skew-Hermitian matrix.
\end{enumerate}
For  each geometry we define  a quartic action and a partition function
\begin{equation*}
\int_{\mathcal{G}} e^{-\left(g\tr D^{2} + \tr D^{4}\right)}dD.
\end{equation*} 

The trace powers in the action for type $(1, 0)$ can be written in terms of $H$ as
\begin{equation*}
	\tr D^2 = 2N\tr H^2 + 2\tr H \tr H,  
\end{equation*}
and 
\begin{equation*}
	\tr D^4 = 2N \tr H ^4 + 8 \tr H \tr H^3 + 6 \tr H^2 \tr H^2.
\end{equation*}
Similarly for type $(0, 1)$ we have
\begin{equation*}
\tr D^2 = -2N\tr L^2 + 2\tr L \tr L, 
\end{equation*}
and 
\begin{equation*}
\tr D^4 = 2N \tr L ^4 - 8 \tr L \tr L^3 + 6 \tr L^2 \tr L^2.
\end{equation*}

As one can now see, these models are single matrix, multi-trace random matrix models.  They have been found to share many similar properties, such as a genus expansion \cite{Shahab thesis}, with single trace single matrix models. Furthermore, many techniques to analyze them have been extended from single trace models such as the coloumb gas method \cite{First paper} and blobbed topological recursion \cite{AK}. 

For the quartic potential, in the large $N$  limit, when $L$ is replaced with $iH$ for some Hermitian matrix $H$, these are the same matrix and Dirac ensembles \cite{First paper}. Furthermore, it is not hard to see that the trace of $\ell$-th power of the Dirac operators is given by
\begin{equation*}
	 \sum_{k=0}^{\ell} {\ell\choose k} \tr H^{\ell-k} \tr H^{k}.
\end{equation*} 
Hence, by the factorization theorem the Dirac moments in the large $N$ limit are 
\begin{equation*}
	d_{\ell}=\lim_{N \rightarrow \infty} \frac{1}{N^2} \langle \tr D^{\ell}\rangle  =  \sum_{k=0}^{\ell} {\ell\choose k} m_{\ell-k} m_{k},
\end{equation*}
and the general  loop equations  of this model are as follows
\begin{align*}
\sum_{k=0}^{\ell-1} m_k m_{\ell-k-1} = g\left(4 m_{\ell+1} + 4m_1 m_\ell \right) + 8m_{\ell+3}+8m_3m_\ell+24m_1m_{\ell+2} + 24m_2m_{\ell+1}.
\end{align*} 
The odd moments are zero since we are taking the integral of odd functions. This simplifies the loop equations to the following form
\begin{align*}
m_{2\ell+2} = \frac{1}{8} \sum_{k=0}^{2\ell-2} m_k m_{2\ell-k-2} - \frac{1}{2} g m_{2\ell} - 3m_2m_{2\ell}.
\end{align*} 
It is clear from the above recursion that once we have $m_2$, we can find all moments of the model. Hence, the search space has dimension one. Using various positivity constraints on the above loop equations, we were able to approximate $m_2$ with respect to $g$.
\begin{figure}[H]
	\centering
	\includegraphics[width=0.5\textwidth]{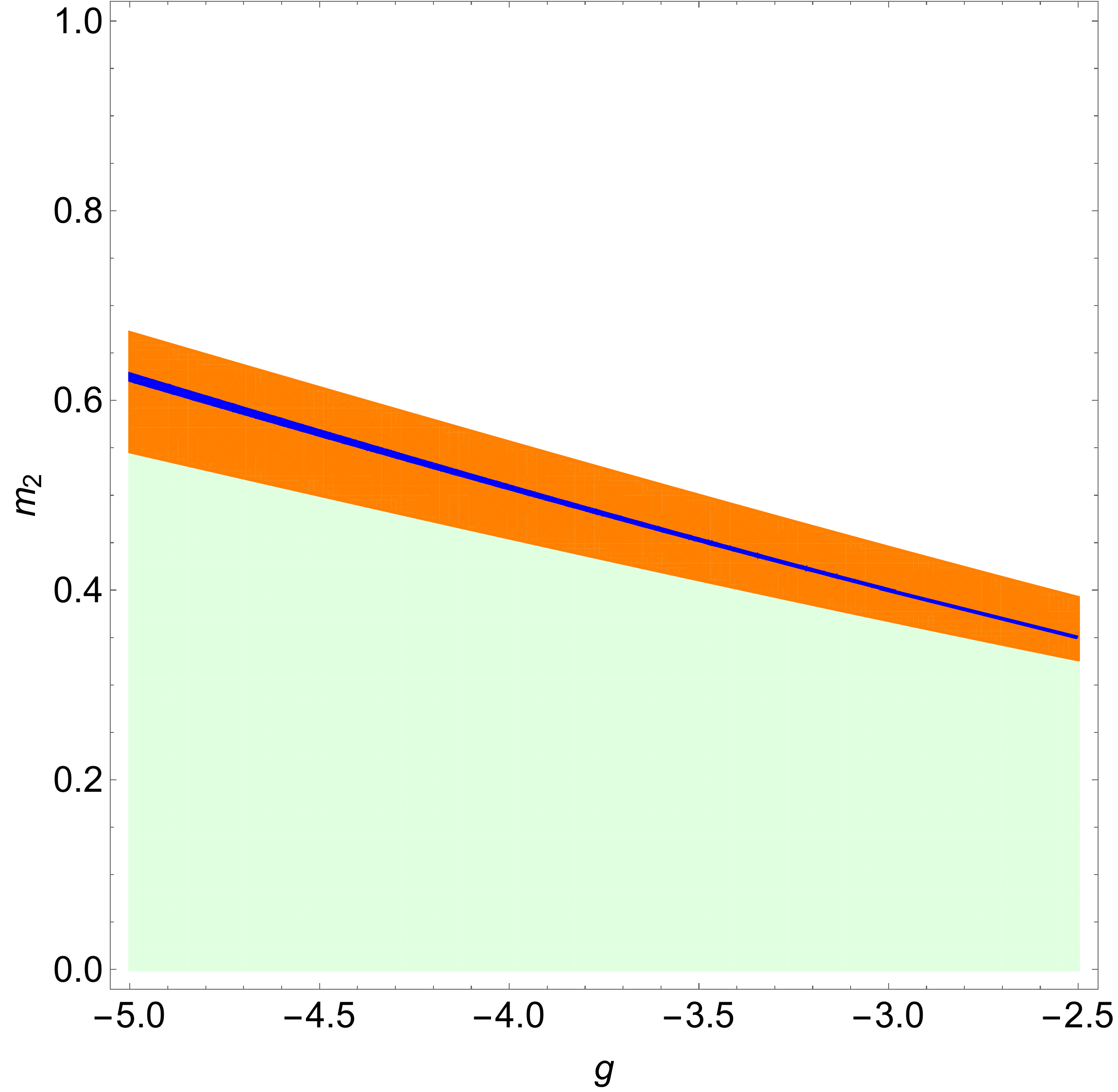}
	\caption{The approximate relation between $m_2$ and $g$, with $g$ varying from $-5$ to $-2.5$.  The different coloured regions denote different constraints applied.}
\end{figure}
\begin{figure}[H]
	\centering
	\includegraphics[width=0.5\textwidth]{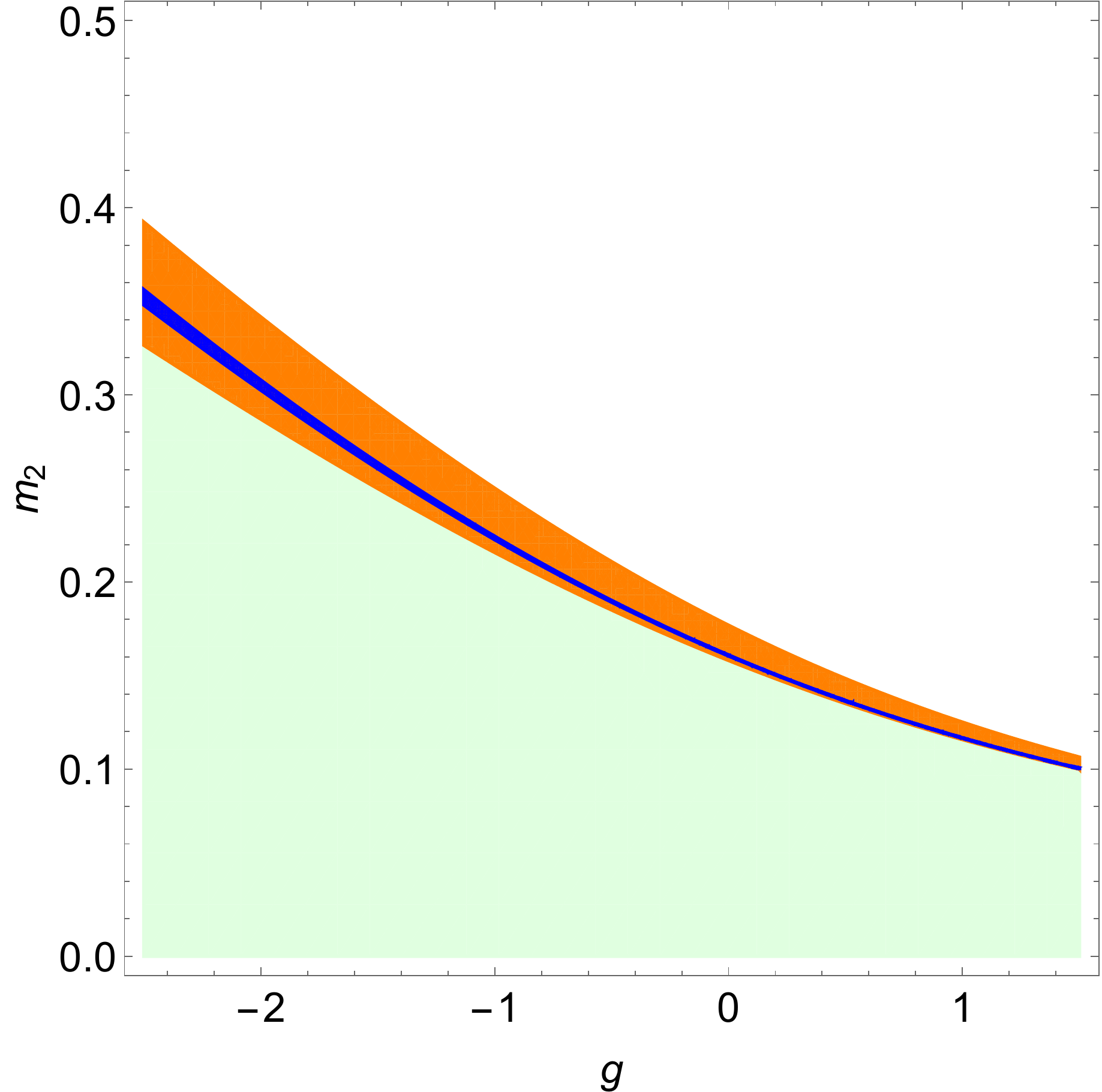}
	\caption{The approximate relation between $m_2$ and $g$, with $g$ varying from $-2.5$ to $1.5$.}
\end{figure}

\pagebreak

The relationship found is remarkably similar (after re-scaling the appropriate factor of a half) to  the analytic relationship found in \cite{First paper}.  In the analytic solution,  for $g$  below the critical value, the relationship is linear, which is clearly visible in the bootstrap solution. For values of $g$ above the critical point, the curve  is also very similar to its analytic counterpart.

\section{Two Matrix Dirac Ensembles}
Consider real finite spectral triples $(A, \mathcal{H}, D)$ where the algebra is  $A= M_N(\mathbb{C})$ and the Hilbert space is $ \mathcal{H} = \mathbb{C}^{2} \otimes M_N(\mathbb{C})$. The three signature two noncommutative geometries from \cite{Barrett2016} are characterized by their Dirac operators $D$ as follows:
\begin{enumerate}
	\item Type (2, 0):  Let  \begin{equation*}
	\gamma^{1} = \begin{pmatrix}
	1 & 0 \\
	0 & -1
	\end{pmatrix}, 
	\quad \quad \gamma^{2} = \begin{pmatrix}
	0 & 1 \\
	1 & 0
	\end{pmatrix}.
	\end{equation*}
	Then, 
	\begin{equation*}
	D = \gamma^{1}\otimes \{H_{1},\cdot\} + \gamma^{2} \otimes \{H_{2},\cdot\},
	\end{equation*}
	where $H_1$ and $H_2$ are Hermitian matrices. 
	
	\item Type (1,1):  Let  \begin{equation*}
	\gamma^{1} = \begin{pmatrix}
	1 & 0 \\
	0 & -1
	\end{pmatrix},
	\quad \quad \gamma^{2} = \begin{pmatrix}
	0 & 1 \\
	-1 & 0
	\end{pmatrix}.
	\end{equation*}
	Then,  
	\begin{equation*}
	D = \gamma^{1}\otimes \{H,\cdot\} + \gamma^{2} \otimes [L,\cdot],
	\end{equation*}
	where $H$ is Hermitian and $L$ is skew-Hermitian.
	
	\item Type (0,2); Let \begin{equation*}
	\gamma^{1} = \begin{pmatrix}
	i & 0 \\
	0 & -i
	\end{pmatrix},
	\quad \quad \gamma^{2} = \begin{pmatrix}
	0 & 1 \\
	-1 & 0
	\end{pmatrix}.
	\end{equation*}
	Then  
	\begin{equation*}
	D = \gamma^{1}\otimes [L_{1},\cdot] + \gamma^{2} \otimes [L_{2},\cdot],
	\end{equation*}
	where $L_1, L_2$ are both skew-Hermitian. 
\end{enumerate}

Now for each geometry  we consider a   quartic action and the partition function
\begin{equation*}
\mathcal{Z}=	\int_{\mathcal{G}} e^{-\left(g\tr D^{2} + \tr D^{4}\right)}dD.
\end{equation*}
 We shall substitute
 $L = iH$, where $H$ is Hermitian, for each skew-hermitian matrix $L$ in the above geometries. The result is that all three matrix models have the same matrix action in the large $N$ limit. To see the explicit potentials we refer the reader to the appendix of \cite{Barrett2016}. Hence, each of these matrix models is identical in this sense, allowing the following results to apply to all three. However, we should not confuse a random matrix ensemble with its Dirac ensemble; their eigenvalues are certainly related but their relationship depends on the geometry. Since the action of this model is even, the odd moments are zero. This means that any moment of a word containing either an odd number of $H_1 $ or $H_2$ is zero. The following terms are the only ones that contribute to the loop equations in the large $N$ limit
 \begin{equation*}
 	\tr D^2 = 4N\tr H_1^2 + 4N\tr H_2^2,
 \end{equation*}
 and 
 \begin{align*}
 	\tr D^4 & =  4N \tr H_1^4 + 4N \tr H_2^4  + 16 N\tr H_1^2H_2^2 -8N \tr H_1H_2H_1H_2 \\
 	&+ 12  (\tr H_1^2)^2  + 12  (\tr H_2^2)^2 + 8 \tr H_1^2 \tr H_2^2.
 \end{align*}

Let us consider the words $H_1^\ell$ for $\ell \geq 1$. The loop equations of this model with respect to these words come from
\begin{align*}
\sum_{i,j=1}^{N}\int_{\mathcal{H}_{N}^{2}} \frac{\partial}{\partial (H_{1})_{ij}} (H_1^\ell)_{ij} \,e^{-\left(g\tr D^{2} + \tr D^{4}\right)}dH_{1}dH_{2} =0,
\end{align*}
giving us
\begin{align}\label{loop equations (2,0)}
	\sum_{k=0}^{\ell-1} m_k m_{\ell-k-1} = (8g+64m_2) m_{\ell+1} + 16m_{\ell+3}-16m_{\ell,1,1,1}  +32 m_{\ell+1 , 2 },
\end{align}
 in the large $N$ limit. Here, we use the notation 
\begin{equation*}
	m_{a,b,c,d} = \lim_{N \rightarrow \infty}\frac{1}{N} \langle \tr H_1^a H_2^bH_1^cH_2^d \rangle. 
\end{equation*}

When $\ell \leq  7$, in the left hand side there is no term that is a product of moments that come from degree four words or higher. For example $m_{2,2}m_{4}$ cannot be found. Thus, loop equations of words with length less than $7$ can be seen as a system of linear equations and that can be solved in terms of $g$ and $m_2$.

\begin{prop}\label{lemma}
	The number of non-trivial moments (up to cyclic permutation and symmetry) that appear in the loop equations of words with length $\ell \geq 9$ is less than the number of non-zero loop equations.
\end{prop}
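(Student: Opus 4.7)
The plan is to reduce both sides of the inequality to counts of binary necklaces under parity constraints, carry out a Burnside enumeration, and close the argument with an explicit base-case check at $\ell=9$ and a monotonicity induction for larger $\ell$. First I would exploit the two commuting $\mathbb{Z}_2$-symmetries $H_i\mapsto -H_i$ of the action $g\,\tr D^2+\tr D^4$: every monomial appearing in the expansions of $\tr D^2$ and $\tr D^4$ displayed above has even degree in each letter separately, so both sign-flips preserve the measure, forcing $m_W=0$ whenever $W$ contains an odd number of either letter. A parallel parity check on the Schwinger-Dyson identity $\sum_{i,j}\int\partial_{(H_q)_{ij}}(W_{ij}e^{-\tilde S})=0$ shows that it collapses to the trivial identity $0=0$ unless $W$ has odd $H_q$-count and even $H_{3-q}$-count. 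Hence non-trivial moments are indexed by cyclic classes of words of even length with both letter counts even, while non-trivial loop equations correspond to cyclic classes of words of odd length of $(H_1,H_2)$-parity $(o,e)$ taken together with $\partial_{H_1}$, the $(e,o)$ case with $\partial_{H_2}$ being identified with these via the $H_1\leftrightarrow H_2$ symmetry of the action.

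With these reductions in place the problem is purely combinatorial. Writing $M_n$ for the number of cyclic-and-swap equivalence classes of $(e,e)$-parity words of length $n$ and $E_n$ for the number of cyclic equivalence classes of $(o,e)$-parity words of length $n$, I would apply Burnside's lemma to the $\mathbb{Z}_n$-rotation action (and to the $H_1\leftrightarrow H_2$ involution for $M_n$) to obtain both counts as divisor sums. The essential input is that a length-$d$ pattern ($d\mid n$) inflates to a length-$n$ word whose letter counts are scaled by $n/d$, so one must split cases according to the parity of $n/d$. The leading asymptotics are $M_n\sim 2^{n-2}/n$ for even $n$ and $E_n\sim 2^{n-1}/n$ for odd $n$, and the Burnside formulas allow one to verify $M(\ell)<E(\ell)$ directly at $\ell=9$ (by enumeration of necklaces of length at most nine) and to extend it to $\ell>9$ by induction: each increment of $\ell$ adds either $M_\ell$ moments or $E_\ell$ equations, and the Burnside formulas ensure the accumulated surplus $E(\ell)-M(\ell)$ remains positive once we pass the threshold at $\ell=9$. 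Since a loop equation from a word of length $n\le\ell$ involves only moments of words of length at most $n+3$, and those high-length boundary moments are themselves constrained by loop equations at later levels, this cumulative comparison is exactly the one needed for the proposition.

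The main obstacle is the parity-constrained Burnside bookkeeping: for each $d\mid n$ the count of length-$d$ patterns whose $n/d$-fold repetition has the prescribed parity depends on whether $n/d$ is even or odd, so the closed-form divisor sums must case-split on the parities of $d$ and $n/d$ (and separately on that of $n$). A secondary subtlety is the $H_1\leftrightarrow H_2$ quotient for the $(e,e)$-moments, where almost every necklace pairs with its swap image into a size-two orbit, but those with equal numbers of $H_1$ and $H_2$ (possible only when $n\equiv 0\pmod 4$ in the $(e,e)$ case) can be swap-fixed and contribute differently to $M_n$. Once these two bookkeeping points are controlled, the inequality $M(\ell)<E(\ell)$ for every $\ell\ge 9$ reduces to a direct comparison of two explicit divisor sums evaluated at the base and propagated inductively.
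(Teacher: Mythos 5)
Your parity reduction and the use of Burnside's lemma are essentially the same ingredients the paper uses (you actually carry out the $(e,e)$ versus $(o,e)$ bookkeeping more precisely than the paper, which just divides by an ad hoc factor of four for ``symmetry and vanishing of odd moments''). However, there is a fatal structural error in the proposal: you quotient the set of loop equations by the cyclic group, and loop equations are \emph{not} cyclically invariant. The Schwinger--Dyson identity $\sum_{i,j}\int \partial_{(H_q)_{ij}}\bigl(W_{ij}\,e^{-\tilde S}\bigr)\,dH=0$ produces on its left-hand side $\sum_{W=AH_qB}\tr A\cdot \tr B$, and this depends on the actual word $W$, not just its cyclic class: rotating a letter from the front of $W$ to the back moves it between the factors $A$ and $B$, which changes the product $\tr A\cdot \tr B$. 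For instance, with $W=H_1H_2^2$ the only splitting contributes $m_0 m_2=m_2$, while the cyclic rotation $W'=H_2H_1H_2$ contributes $m_1 m_1=0$; these are genuinely different equations with different left-hand sides (and different right-hand sides too, since $\tr(W\,\partial_1\tilde S)\neq\tr(W'\,\partial_1\tilde S)$ in general). This is precisely why the paper's lower bound on the number of loop equations is $2^{\ell-2}$ \emph{without} a division by $\ell$: the paper explicitly notes that loop equations coming from distinct words in the same cyclic orbit are usually distinct, losing at most a factor of two, not a factor of $\ell$.

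The consequence is that your asymptotics reverse the inequality you are trying to prove. With $E_\ell\sim 2^{\ell-1}/\ell$ equations and $M_{\ell+3}\sim 2^{\ell+1}/(\ell+3)$ new moments, one has $E_\ell<M_{\ell+3}$ whenever $\ell+3<4\ell$, i.e.\ for every $\ell\ge 2$. At your proposed base case $\ell=9$ a direct Burnside count gives $E_9=\tfrac{1}{9}(256+2\cdot 4+6\cdot 1)=30$ necklaces of $(o,e)$ words of length $9$, against roughly $85$ swap-necklaces of $(e,e)$ words of length $12$ --- exactly the wrong direction. The paper instead compares $2^{\ell-2}=128$ equations with at most $\tfrac{1}{12}(2^{10}+2^{6})\approx 91$ moments at $\ell=9$. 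Everything else in your sketch --- the $H_i\mapsto -H_i$ sign-flip reductions, the parity-aware divisor-sum Burnside formulas, the $H_1\leftrightarrow H_2$ swap quotient on the moment side --- is sound and in fact sharpens the paper's rough estimate of the moment count; the single thing that must be removed for the argument to have any chance is the cyclic quotient on the loop-equation side.
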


\begin{proof}
	Denote by $W$ the set of all words with length $\ell+3$. Note that the degree of new moments that appear in equation (\ref{loop equations (2,0)})  is $\ell + 3$.  The set $W$ is acted on by  $A=\mathbb{Z}/{(\ell+3)}\mathbb{Z}$, which shifts letters. Then it follows from Burnside's lemma that
\begin{equation*}
 |W/A|= \frac{1}{\ell +3} \sum_{a\in A} |W^a| \leq \frac{1}{\ell +3} \left( 2^{\ell+3}+\sum_{j=1}^{l+3}    2^{\text{min}(j,l+3-j)}                  \right) \leq  \frac{1}{\ell +3} \left( 2^{\ell+3} + 2^{\frac{\ell+7}{2}}  \right),
\end{equation*}
where  $W^a$ denotes the set of words left invariant by $a\in A$. 
Considering the symmetry property and vanishing of  odd moments, we will get

\begin{equation}\label{moments}
\#\,\text{new moments}	 \leq \frac{1}{\ell +3} \left( 2^{\ell+1} + 2^{\frac{\ell+3}{2}}\right).
\end{equation} 
We may have similar loop equations for two different (up to cyclic symmetry property) words,  but it  is not hard to see that it is never the case that more than half of them are identical. Using symmetry property we have

\begin{equation}\label{SDE's}
 \#\,\text{new loop equations} \geq 2^{\ell-2}  .
\end{equation} 

It follows from equations (\ref{moments}) and (\ref{SDE's}) that for $\ell \geq 9$

\begin{equation*}
	\#\,\text{new loop equations} \geq \#\text{new moments.}
\end{equation*} 

\end{proof}

\begin{corollary}
The dimension of the search space of the above model is $1$.
\end{corollary}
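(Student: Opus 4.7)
The plan is to proceed by strong induction on word length, using the proposition as the inductive engine for large lengths and a direct/computational treatment of the base cases. The key observation is that the loop equations derived from $\partial/\partial(H_q)_{ij}(W_{ij}\,e^{-S})$ for a word $W$ of length $\ell$ involve moments of words of length at most $\ell+3$ on the right-hand side (coming from $\tr D^4$) and factorized products of moments of words of total length $\ell-1$ on the left-hand side. Thus, once all moments indexed by words of length $\le \ell+2$ are known as functions of $g$ and $m_2$, the new moments appearing are precisely those indexed by words of length $\ell+3$.

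First, I would handle the base case. The paper already notes that for $\ell \le 7$ the right-hand sides of the loop equations contain no product of two moments of words of length $\ge 4$; hence the loop equations in that range form a linear system in the unknown low-order moments with coefficients depending polynomially on $g$ and $m_2$. I would check, by direct enumeration (e.g.\ using the same \emph{Python} script mentioned in Section 2.3), that this linear system has a unique solution expressing every moment $m_w$ with $|w|\le 8$ as a rational function of $g$ and $m_2$. This secures the foundation of the induction and covers all words not handled by the proposition.

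For the inductive step, assume that every moment of word length $\le \ell+2$ has been expressed in terms of $g$ and $m_2$, for some $\ell\ge 9$. Applying the loop equation construction to each word $W$ of length $\ell$ produces linear relations (after moving factorized products to the left) among moments of length $\ell+3$, with coefficients that are known functions of $g$ and $m_2$ by the inductive hypothesis. By the proposition, the number of such independent loop equations strictly exceeds the number of new moments. Provided the resulting linear system has full column rank, one can solve uniquely for all length-$(\ell+3)$ moments as rational functions of $g$ and $m_2$, completing the induction.

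The main obstacle is precisely the full-rank condition: the proposition delivers a favourable \emph{count}, but not automatically linear independence of the loop equations modulo lower-order terms. I would address this by exhibiting, for each new moment $m_w$ of length $\ell+3$, a specific length-$\ell$ seed word whose loop equation contains $m_w$ with nonzero coefficient and whose other length-$(\ell+3)$ contributions can be eliminated using the remaining equations; the cyclic and parity symmetries used in the proof of the proposition suggest a natural triangular ordering of words that should make this elimination explicit. Combined with the inductive hypothesis and the base case, this shows that every moment is determined by $g$ and $m_2$, so the search space is one-dimensional.
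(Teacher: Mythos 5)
Your proposal follows essentially the same route as the paper's proof: treat the low-order moments directly as a linear system in $g$ and $m_2$, then induct using Proposition~\ref{lemma} to argue that each batch of new, higher-order moments is pinned down because there are at least as many loop equations as unknowns. Where you genuinely add something is in your final paragraph: you correctly observe that the counting inequality in the proposition is, on its own, not enough. Having more (or equally many) equations than unknowns only determines the unknowns if the coefficient matrix has full column rank, and neither the proposition nor the paper's own two-line proof of the corollary addresses this. The paper simply writes ``the number of new moments is less than the number of nonzero loop equations for a given $g$. Thus the dimension of the search space is $1$,'' which silently assumes the needed linear independence. You have named this gap precisely, which is a real improvement in rigor over the paper's phrasing.

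That said, your proposal does not actually close the gap: you sketch a plan (exhibit for each new moment a seed word whose loop equation isolates it, using cyclic and parity symmetry to set up a triangular elimination) but stop short of carrying it out, so your argument has the same logical status as the paper's. A second, smaller imprecision: your base case claims to determine all moments of word length $\le 8$ from the loop equations with $\ell\le 7$, but loop equations with seed words of length $\ell$ reach moments of length $\ell+3$, and the even/odd vanishing has to be invoked to see that the ranges covered by the base case and by the proposition ($\ell\ge 9$) actually overlap without leaving an untreated even length in between. Both points are fixable and the overall structure is sound; you should either supply the rank argument or note explicitly, as the paper implicitly does, that it is being verified computationally for the orders considered.
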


\begin{proof}
	Inductively we can substitute the lower moments into the new loop equations and solve the system of linear equations in terms of $g$ and $m_2$. By proposition \ref{lemma}, the number of new moments is less than the number of nonzero loop equations for a given $g$. Thus the dimension of the search space is $1$.
\end{proof}

By generating all the loop equations for words up to order ten in \textit{Python} and then using \textit{Maple} we found some remarkable formulas for moments up to order eight strictly in terms of $g$ and the second moment $m_{2}$. Here are a selection of them:

\begin{equation*}
m_{4} = -\frac{1}{8}gm_2 + \frac{1}{64},
\end{equation*}
\begin{equation*}
m_{2,2} = -\frac{1}{8}gm_2 -m_{2}^2 + \frac{1}{64},
\end{equation*}
\begin{equation*}
m_{1,1,1,1} = \frac{gm_{2}}{8} + 2 m_{2}^2 - \frac{1}{64},
\end{equation*}
\begin{equation*}
	m_{6} = {\frac {{g}^{2}}{64}}-{\frac {g}{512}}-{\frac {g{m_{2}}^{2}}{8}}+{\frac {
			3\,m_{2}}{64}}-{\frac {5\,{m_{2}}^{3}}{4}},
\end{equation*}
\begin{equation*}
	m_{4,2} = {\frac {{g}^{2}m_{2}}{64}}+{\frac {g{m_{2}}^{2}}{8}}-{\frac {g}{512}}-{\frac {
			{m_{2}}^{3}}{4}}+{\frac {m_{2}}{64}},
\end{equation*}
\begin{equation*}
	m_{3,1,1,1} = -{\frac {{g}^{2}m_{2}}{64}}-{\frac {3\,g{m_{2}}^{2}}{8}}-{\frac {7\,{m_{2}}^{3}}{4
	}}+{\frac {g}{512}}+{\frac {m_{2}}{64}},
\end{equation*}
\begin{equation*}
	m_{2,1,2,1} = {\frac {{g}^{2}m_{2}}{64}}+{\frac {3\,g{m_{2}}^{2}}{8}}-{\frac {g}{512}}+{
		\frac {11\,{m_{2}}^{3}}{4}}-{\frac {m_{2}}{64}},
\end{equation*}

\begin{equation*}
m_{8}=-{\frac {gm_{2}}{64}}+{\frac {{m_{2}}^{4}}{4}}+{\frac {{g}^{2}}{4096}}+{\frac 
		{{m_{2}}^{2}}{256}}+{\frac{3}{4096}}-{\frac {{g}^{3}m_{2}}{512}}+{\frac {3\,{g
			}^{2}{m_{2}}^{2}}{64}}+{\frac {g{m_{2}}^{3}}{2}}.
\end{equation*}


Using only the explicit formulas in terms of $g$ and $m_{2}$  and some associated positivity constraints we were able to find the following regions in the solution space using Mathematica.
\begin{figure}[H]
\centering
\includegraphics[width=0.5\textwidth]{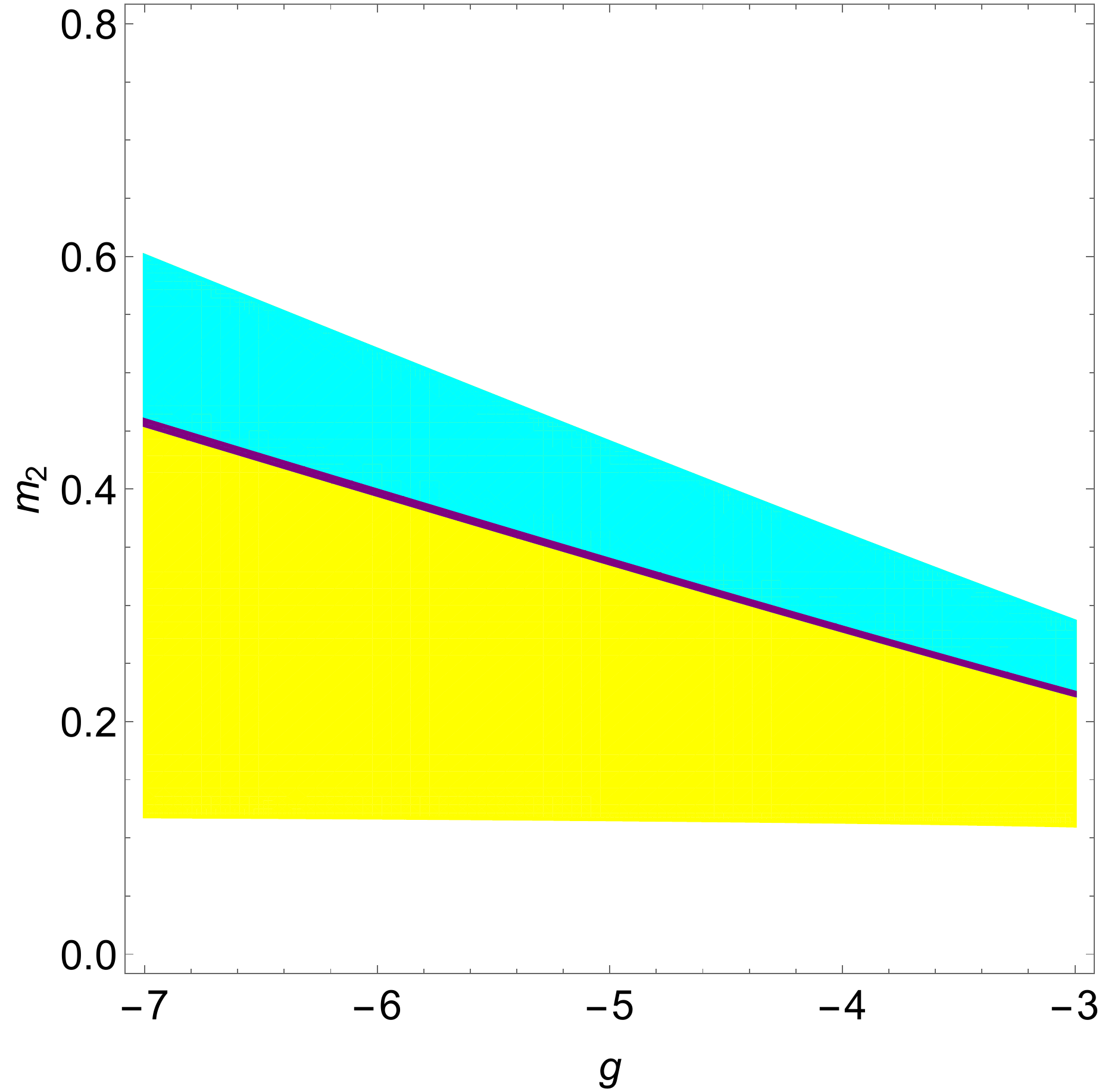}
\caption{The search space region for the (2,0) quartic model where the relationship between $g$ and $m_{2}$ becomes linear.}
\end{figure}

\begin{figure}[H]
	\centering
	\includegraphics[width=0.5\textwidth]{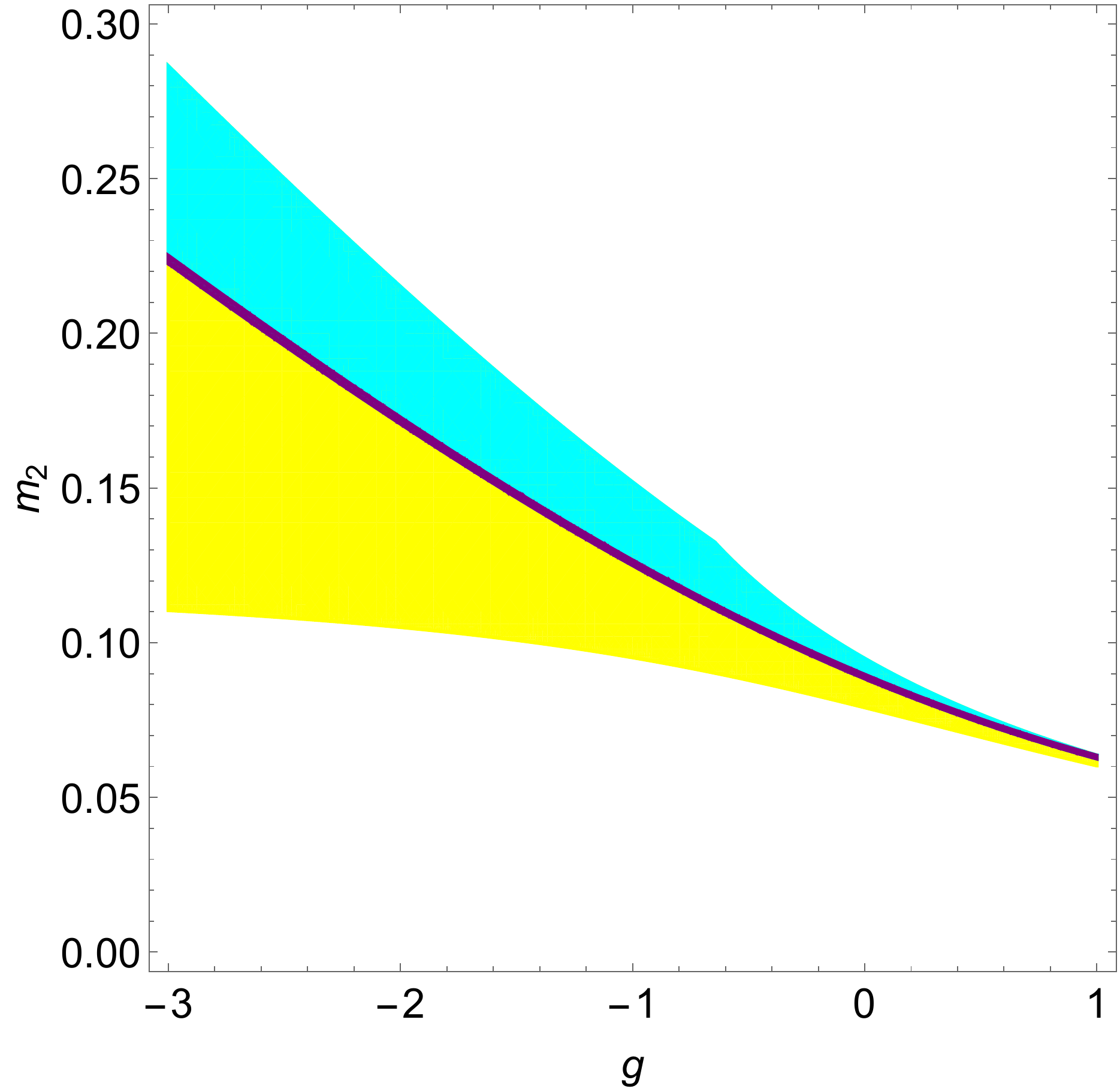}
	\caption{The search space region for the (2,0) quartic model where the relationship between $g$ and $m_{2}$ is  nonlinear.}
\end{figure}
 Note that 
 \begin{equation*}
 	\lim_{N \rightarrow \infty}\frac{1}{N^{2}} \langle \tr D^{2}\rangle = 8 	\lim_{N \rightarrow \infty} \frac{1}{N}\langle \tr H_{1}^{2} \rangle. 
 \end{equation*}
 With this factor of 8 taken into account the figure is remarkably similar to the one computed for ten by ten matrices in \cite{glaser}.  Unlike with the signature one models, no  explicit formula is known for the moments of the Dirac ensemble of signature two models. However, they can be computed with much effort using the combinatorics of chord diagrams; see \cite{Sanchez}. With the above formulas and those from \cite{Sanchez} we have computed the first three nonzero moments of the signature two Dirac ensembles in terms of $m_{2}$ and $g$:
 \begin{equation*}
 	d_{2} = 8\, m_{2},
 \end{equation*}
 \begin{equation*}
 	d_{4} = -4\,gm_2+{\frac{1}{2}},
 \end{equation*}
\begin{equation*}
d_{6} = -160\,{m_2}^{3}-16\,g{m_2}^{2}+6\,m_2+2\,{g}^{2}m_2-{\frac {1}{4}}g.
\end{equation*}
It is also worth noting that the relation between $m_{2}$ and $g$ appears to be linear for values of $g$ roughly below the phase transition  \cite{glaser}. This is precisely what happened for the signature one model analyzed in \cite{First paper}.
 
 While in \cite{bootstraps} the size of the search space is estimated for both single and multi-matrix models, the multi-matrix model from the (2,0) quartic geometry, despite its complexity, had a search space dimension of one! It is now worth noting that our technique differs from Lin's mainly in that when we look for the search space we examine the loop equations generated by all words up to a given order. A smaller search space dimension here is particularly counter intuitive since using more words means introducing more complicated new moments. We believe this is an artifact of this particular model. 

\section{Summary and Outlook}
In this paper Dirac ensembles and their associated matrix models are analyzed using the bootstrap technique. What is found is in very close agreement with simulation results of   \cite{glaser} and analytic treatment  of  \cite{First paper}. What is particularly interesting is that the relationship between the coupling constant and the second moment of the signature two matrix ensemble is linear after the phase transition. This linear relationship was also found analytically for the signature one matrix models in  a similar range of the coupling constant \cite{First paper}. This finding suggests that there may be a deep relationship between the multi-matrix models studied here and the single matrix models.

It is hoped that the computation of moments found here will be used to learn more about these models. Known analytic results do not extend to geometries with signature of two or higher and Monte Carlo simulations are severely limited by matrix size. Hence, bootstrapping offers a new  opportunity to study Dirac and random matrix ensembles suggested by noncommutative geometry.

We hope to apply the bootstrap method  to  more complicated geometries such as (0,3), studied in \cite{glaser}. The methods outlined in this paper should in theory work for any higher signature geometry. It would also be interesting if one could estimate the supports  of the limiting eigenvalue density functions. This would allow one to reconstruct the eigenvalue distributions of both the Dirac and the random matrix ensembles.

Additionally the formulas found for the Dirac moments seem to exhibit some a pattern. If one could  find the loop equations strictly in terms of Dirac moments for any geometry,  this would be an impactful step towards  a better understanding of these models.

In \cite{Barrett2016} it was speculated that the $(2,0)$ model (among others) exhibited manifold-like behaviour near the phase transition. What is meant by this is that the spectrum visually has similarities with the spectrum of the Dirac operator on $S^{2}$. Its spectral density function is of the form $|\lambda|$. Evidence to support this was found in \cite{Spectral estimators}. The work in \cite{truncated} is promising if it can be applied to these models. Another even more recent approach is the utilization of Functional Renormalization Group techniques on these models found in \cite{Sanchez2}.
\section{Appendix on Factorization}
Suppose a formal model is given as described in Section two,  and further assume that the moments of such a model have a genus expansion,  i.e.
\begin{equation*}
	\langle \tr H_{p}^{\ell} \rangle = \sum_{g\geq 0}N^{1-2g}\,\mathcal{T}^{g}_{\ell}, 
\end{equation*}
and
\begin{align*}
	\langle \tr H_{p}^{\ell_{1}}\tr H_{q}^{\ell_{2}}\rangle &= \langle \tr H_{p}^{\ell_{1}}\tr H_{q}^{\ell_{2}}\rangle_{c} + \langle \tr H_{p}^{\ell_{1}}\rangle \langle \tr H_{q}^{\ell_{2}}\rangle\\
	 &= \sum_{n\geq 0} N^{-2g} \mathcal{T}^{g}_{\ell_{1},\ell_{2}} + \left(\sum_{n\geq 0} N^{1-2g}\mathcal{T}_{\ell_{1}}^{g}\right)\left(\sum_{n\geq 0} N^{1-2g}\mathcal{T}_{\ell_{2}}^{g}\right).
\end{align*} 
Here the coefficients are a formal series that counts the number of genus $g$ maps with a boundary of length $\ell$. The subscript $c$ denotes the sum over connected maps with two boundaries of lengths $\ell_{1}$ and $\ell_{2}$. For more details on sums over maps see \cite{Eynard2018}. Thus, taking the large $N$ limit, we obtain
\begin{equation*}
	\lim_{N \rightarrow \infty} \frac{1}{N}	\langle \tr H_{q}^{\ell} \rangle = \mathcal{T}^{0}_{\ell},
\end{equation*}
and 
\begin{equation*}
	\lim_{N \rightarrow \infty} \frac{1}{N^{2}}  \langle \tr H_{p}^{\ell_{1}}\tr H_{q}^{\ell_{2}}\rangle = \mathcal{T}^{0}_{\ell_{1}} \mathcal{T}^{0}_{\ell_{2}} =	\lim_{N \rightarrow \infty} \frac{1}{N^{2}} 	\langle \tr H_{p}^{\ell_{1}} \rangle	\langle \tr H_{q}^{\ell_{2}} \rangle.
\end{equation*}
It is shown  in \cite{Second paper}    that all the models studied in this paper do indeed have a genus expansion, hence the above factorization holds.

\end{document}